\documentclass[11pt]{article}
\usepackage[a4paper,margin=2.6cm]{geometry}
\usepackage{amsmath,amssymb,amsthm}
\usepackage{booktabs}
\usepackage{array}
\usepackage{enumitem}
\usepackage{xcolor}
\usepackage[colorlinks=true,linkcolor=blue!55!black,citecolor=blue!55!black,urlcolor=blue!55!black]{hyperref}

\theoremstyle{plain}
\newtheorem{theorem}{Theorem}[section]
\newtheorem{lemma}[theorem]{Lemma}

\newtheorem{proposition}[theorem]{Proposition}
\theoremstyle{definition}
\newtheorem{definition}[theorem]{Definition}
\newtheorem{example}[theorem]{Example}
\newtheorem{remark}[theorem]{Remark}

\newcommand{\N}{\mathcal N}
\newcommand{\Rch}{\mathcal R}

\title{A Reachable-State Operator Formulation of Deferred Acceptance:\\
Progress Invariants and Structural Diagnostics}
\author{Yoshiteru Ishida\\{\small ORCID: 0000-0003-1641-5385}}
\date{}

\begin{document}
\maketitle

\begin{abstract}
We give a reachable-state operator formulation of one-to-one deferred acceptance with
strict, possibly incomplete preference lists.  A local update operator is defined for each
active proposer, while a scheduler selects which local update is applied.  On the states
reachable from the canonical empty initial state, three invariants are immediate: the set
of proposed edges grows strictly, every proposer visits each acceptable receiver at most
once, and every receiver holds its most-preferred proposal received so far.  These
invariants yield termination after at most $|E|$ proposals and stability of every terminal
reachable state.  The classical rejection lemma then gives proposer optimality and
schedule independence.  We separate these trajectory statements from two logically
different results: the distributive-lattice structure of the full stable-matching set and
one-sided strategy-proofness.  A diagnostic table records which proof obligation is lost
when a model changes the bipartition, ordinal comparisons, proposal irreversibility, or
receiver choice rule.  The paper makes no new complexity claim; its purpose is a precise
operator-level account of the classical proof architecture and of the limits of that
architecture.
\end{abstract}

\noindent\textbf{Keywords:} stable matching; deferred acceptance; reachable states;
proposal invariant; rejection lemma; lattice; structural diagnostics

\noindent\textbf{MSC 2020:} 05A05; 91B68; 06D05

\section{Introduction}

Deferred acceptance (DA), introduced by Gale and Shapley~\cite{gale1962}, is usually
presented as a sequential algorithm: an unmatched proposer visits acceptable receivers in
preference order, and each receiver retains the best proposal received so far.  The
standard proof establishes termination, stability, and proposer optimality.  Closely
related but logically distinct results describe the lattice of all stable matchings and
the strategy-proofness of the proposer side~\cite{dubins1981,gusfield1989,knuth1997,roth1982}.

This note formalizes the sequential process as a family of local state updates.  A fixed
scheduler turns the family into a deterministic trajectory, but the proofs do not depend
on the scheduler.  The state space is restricted to states reachable from the canonical
empty initial state.  This restriction is essential: an arbitrary perfect matching can be
made a syntactic fixed point of a poorly chosen state encoding without being stable.

The contribution is organizational rather than complexity-theoretic.  We make explicit
which invariants prove which classical conclusions, and we distinguish the DA trajectory
from fixed-point formulations on lattices due to Adachi~\cite{adachi2000} and
Fleiner~\cite{fleiner2003}.  The resulting diagnostic language says only that a particular
classical proof no longer applies when an invariant is lost; it does not infer
intractability.

\section{Strict bipartite instances and reachable states}

Let $P$ be a finite set of proposers and $R$ a finite set of receivers.  Acceptability is
specified by a bipartite graph $G=(P\cup R,E)$.  Every vertex has a strict order over its
neighbors.  Matchings may leave vertices unmatched, and being unmatched is worse than any
acceptable partner.

For $p\in P$, write
\[
  L(p)=(r^p_1,r^p_2,\ldots,r^p_{d(p)})
\]
for the acceptable receivers in decreasing order of preference, where $d(p)=|\N(p)|$.

\begin{definition}[State]
A state is a pair $s=(\mu,q)$, where $\mu$ is a matching in $G$ and
\[
 q_p\in\{1,2,\ldots,d(p)+1\}\qquad(p\in P).
\]
The value $q_p$ is the position of the next receiver to whom $p$ will propose;
$q_p=d(p)+1$ means that $p$ has exhausted its list.  A proposer is \emph{active} in $s$
if it is unmatched under $\mu$ and $q_p\le d(p)$.  The active set is denoted $A(s)$.
\end{definition}

\begin{definition}[Canonical initial state]
The canonical initial state is $s_0=(\varnothing,q^0)$, where $q^0_p=1$ for every
$p\in P$.
\end{definition}

\begin{definition}[Local update]
Let $p\in A(s)$ and let $r=r^p_{q_p}$.  The local update $U_p(s)=(\mu',q')$ is obtained as
follows.
\begin{enumerate}[label=(\roman*)]
\item Set $q'_p=q_p+1$ and $q'_x=q_x$ for $x\ne p$.
\item If $r$ is unmatched, set $\mu'=\mu\cup\{pr\}$.
\item If $r$ is matched to $p'$ and $r$ prefers $p$ to $p'$, replace $p'r$ by $pr$.
\item Otherwise set $\mu'=\mu$.
\end{enumerate}
\end{definition}

A \emph{scheduler} chooses one element of $A(s)$ whenever $A(s)\ne\varnothing$.  For a
scheduler $\sigma$, define
\[
 \Phi_\sigma(s)=
 \begin{cases}
   U_{\sigma(s)}(s),&A(s)\ne\varnothing,\\
   s,&A(s)=\varnothing.
 \end{cases}
\]
Thus $\Phi_\sigma$ is a deterministic map once $\sigma$ is fixed.

\begin{definition}[Reachable state]
A state is reachable if it occurs on a trajectory
$s_0,s_1,s_2,\ldots$ with $s_{t+1}=\Phi_\sigma(s_t)$ for some scheduler $\sigma$.
The set of states reachable under at least one scheduler is denoted $\Rch$.
\end{definition}

\section{Trajectory invariants}

For a reachable state $s=(\mu,q)$, define the proposed-edge set
\[
 H(s)=\{\,p r^p_k: p\in P,\ 1\le k<q_p\,\}
\]
and the proposal potential
\[
 Q(s)=|H(s)|=\sum_{p\in P}(q_p-1).
\]

\begin{lemma}[Reachable-state invariants]\label{lem:invariants}
Along every trajectory from $s_0$ the following hold.
\begin{enumerate}[label=(\alph*)]
\item Each nonterminal update adds exactly one edge to $H(s)$, so $Q$ increases by one.
\item No acceptable edge is proposed more than once.
\item Every held edge belongs to $H(s)$.
\item Once a receiver becomes matched, it remains matched at every later state.
\item If receiver $r$ is matched in state $s$, then its partner is the most-preferred
proposer among all proposers from whom $r$ has received a proposal by state $s$.
\end{enumerate}
\end{lemma}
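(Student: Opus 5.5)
The plan is a single induction on the trajectory length $t$, proving (a)--(e) together for $s_t$ and every scheduler $\sigma$. At $s_0$ we have $\mu=\varnothing$, $q_p=1$ and $H(s_0)=\varnothing$, so (c)--(e) hold trivially. (a) and (b) are statements about transitions, so they come from the inductive step.

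Suppose the claims hold at $s=(\mu,q)$ and $A(s)\ne\varnothing$. Let $p=\sigma(s)$, $r=r^p_{q_p}$ and $s'=U_p(s)$. By rule (i) only $q_p$ changes, and it increases by one. Hence
\[
H(s')=H(s)\cup\{pr\},
\]
and $pr\notin H(s)$, because $H(s)$ contains only the edges $pr^p_k$ with $k<q_p$ and the entries of $L(p)$ are distinct. This gives (a), with $Q(s')=Q(s)+1$. A terminal step leaves $s$ unchanged and proposes nothing, so every edge $pr$ is proposed only at the step where $q_p$ passes its index. Since each $q_p$ is monotone, that happens at most once, which gives (b).

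For (c), the held edges of $\mu'$ are the edges of $\mu$ (minus possibly $p'r$) together with possibly $pr$. The edges of $\mu$ lie in $H(s)\subseteq H(s')$ by induction, and $pr\in H(s')$. For (d), the only way an edge is removed is case (iii), and there $r$ is immediately rematched to $p$. So no receiver goes from matched to unmatched, and receivers other than $r$ are untouched.

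For (e), let $\mathrm{Prop}_s(r)=\{x: xr\in H(s)\}$ be the set of proposers $r$ has heard from. We have $\mathrm{Prop}_{s'}(r)=\mathrm{Prop}_s(r)\cup\{p\}$, and every other receiver's set is unchanged, so (e) persists for them. For $r$ itself we split into the three cases.
\begin{enumerate}[label=(\roman*)]
\item If $r$ was unmatched, then by the contrapositive of (e) at $s$ it had received no proposal. Here we need that a receiver that has received a proposal is matched: it was matched right after its first proposal, since case (ii) applies, and (d) keeps it matched. So $\mathrm{Prop}_{s'}(r)=\{p\}$, and $p$ is trivially the best.
\item If $r$ held $p'$, the best of $\mathrm{Prop}_s(r)$, and prefers $p$, then $p$ is the best of the enlarged set.
\item Otherwise $p'$ stays and remains the best, since preferences are strict.
\end{enumerate}

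The main obstacle is this coupling in (e): we need that unmatched receivers have received no proposals. I would therefore strengthen the induction hypothesis explicitly with the clause ``$r$ is matched iff $\mathrm{Prop}_s(r)\neq\varnothing$''. This clause is preserved by the same case analysis.

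Finally, I would note that the definition of $\Rch$ allows only the scheduler $\sigma$, so there is no other way to reach a state. The induction therefore covers all of $\Rch$.
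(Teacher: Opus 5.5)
Your proof is correct and follows the same route as the paper: induction along the trajectory from $s_0$, with a case split on the three branches of the local update $U_p$. Your explicit strengthening of the hypothesis by the clause that a receiver is matched if and only if it has received a proposal (a consequence of (c), (d) and the update rule) supplies the step that the paper's one-line argument for (e) leaves implicit. That step is that a receiver which was unmatched before accepting $p$ had received no earlier proposal.
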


\begin{proof}
Parts (a)--(c) follow immediately from the pointer update and the canonical initial state.
Part (d) follows because an accepted proposal can only fill an unmatched receiver or replace
its current partner.  For (e), use induction on the trajectory length.  The assertion is
initially vacuous.  A receiver changes partner only when a newly proposing proposer is
preferred to its current partner, so after the update it holds the best proposal received so far.
\end{proof}

\begin{theorem}[Finite termination]\label{thm:termination}
For every scheduler, the trajectory from $s_0$ reaches a terminal reachable state after at
most $|E|$ proposals.
\end{theorem}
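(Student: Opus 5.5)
The plan is a potential-function argument using the proposal potential $Q$ and Lemma~\ref{lem:invariants}(a)--(b). Fix a scheduler $\sigma$ and consider the trajectory $s_0,s_1,\ldots$ with $s_{t+1}=\Phi_\sigma(s_t)$. Every $s_t$ is reachable by definition, so the lemma applies at each step.

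First I bound $Q$ above. For any state, $q_p\le d(p)+1$, so
\[
 Q(s)=\sum_{p\in P}(q_p-1)\le\sum_{p\in P}d(p)=|E|,
\]
since each acceptable edge has exactly one proposer endpoint. Equivalently, $H(s)\subseteq E$. Initially $Q(s_0)=0$.

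Second, I use Lemma~\ref{lem:invariants}(a). As long as $A(s_t)\neq\varnothing$, the step $s_{t+1}=U_{\sigma(s_t)}(s_t)$ is one proposal and raises $Q$ by exactly one. Hence if $s_0,\ldots,s_{t}$ are all nonterminal, then $Q(s_{t+1})=t+1$. Combined with $Q\le|E|$, this forces $t+1\le|E|$. So some $s_T$ with $T\le|E|$ has $A(s_T)=\varnothing$. That state is terminal, since $\Phi_\sigma(s_T)=s_T$ and the trajectory stays there. It is reachable because it lies on the trajectory. The number of proposals made equals $Q(s_T)\le|E|$.

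The only point needing care is that the update stays inside the state space. When $p\in A(s)$ we have $q_p\le d(p)$, so $q'_p=q_p+1\le d(p)+1$ and $r=r^p_{q_p}$ is well defined. The resulting $\mu'$ must also be a matching: in case (ii) $r$ is unmatched, and $p$ was unmatched because it is active; in case (iii) we swap $p'r$ for $pr$. So the upper bound on $Q$ remains valid along the whole trajectory. No real obstacle is expected; the bound $|E|$ is exactly the count of distinct proposals permitted by Lemma~\ref{lem:invariants}(b).
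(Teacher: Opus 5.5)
Your proof is correct and is essentially the paper's argument: the potential $Q$ rises by exactly one at each nonterminal update (Lemma~\ref{lem:invariants}(a)) and is bounded by $|E|$. You get that bound directly from $q_p\le d(p)+1$ rather than by citing Lemma~\ref{lem:invariants}(b), which is an equivalent route, and your checks that $U_p$ stays inside the state space and that some $T\le|E|$ has $A(s_T)=\varnothing$ spell out what the paper's two-line proof leaves implicit.
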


\begin{proof}
By Lemma~\ref{lem:invariants}(a), $Q$ increases by one at each nonterminal update.  By
Lemma~\ref{lem:invariants}(b), $Q\le |E|$.
\end{proof}

\section{Stability and proposer optimality}

A matching $\mu$ is stable if no acceptable edge $pr\notin\mu$ satisfies both: proposer
$p$ prefers $r$ to $\mu(p)$ (or is unmatched), and receiver $r$ prefers $p$ to $\mu(r)$
(or is unmatched).

\begin{theorem}[Terminal stability]\label{thm:stability}
The matching component of every terminal state reachable from $s_0$ is stable.
\end{theorem}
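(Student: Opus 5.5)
The plan is a proof by contradiction that uses only the invariants of Lemma~\ref{lem:invariants}. Let $s=(\mu,q)$ be a terminal reachable state, so $A(s)=\varnothing$. Suppose some acceptable edge $pr\notin\mu$ blocks $\mu$: proposer $p$ prefers $r$ to $\mu(p)$ or is unmatched, and receiver $r$ prefers $p$ to $\mu(r)$ or is unmatched. Write $r=r^p_k$. The first step is to show that $pr\in H(s)$, that is, $k<q_p$. The second step is to derive a contradiction from the receiver-side invariant (e) together with (d).

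For the first step, split according to whether $p$ is matched. If $p$ is unmatched in $\mu$, terminality gives $q_p=d(p)+1$, because otherwise $p\in A(s)$. Hence every acceptable receiver of $p$, including $r$, lies in $H(s)$. If $p$ is matched, say $\mu(p)=r^p_j$, then Lemma~\ref{lem:invariants}(c) gives $pr^p_j\in H(s)$, so $j<q_p$. Since $p$ prefers $r$ to $\mu(p)$, we have $k<j<q_p$, and again $pr\in H(s)$.

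For the second step, $r$ has received a proposal from $p$ by state $s$. The receiver updates in (ii) and (iii) leave $r$ matched afterward, and the rejection branch (iv) applies only when $r$ is already matched. So once $r$ receives any proposal it is matched, and by (d) it stays matched. Thus $r$ is matched in $\mu$. By (e), $\mu(r)$ is the most-preferred proposer among all who have proposed to $r$, and $p$ is one of them. Since $p\neq\mu(r)$ (as $pr\notin\mu$) and preferences are strict, $r$ prefers $\mu(r)$ to $p$. This contradicts the assumption that $r$ prefers $p$ to $\mu(r)$, so no blocking edge exists and $\mu$ is stable.

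The only delicate point is the matched case of the first step. It relies on $q_p$ moving monotonically through $L(p)$, so that the current partner's index is below $q_p$ and every receiver ranked above it has already been tried. Invariant (c), together with the fact that $L(p)$ is listed in decreasing order of preference, supplies exactly this, so I expect no real obstacle. The only care needed is to state explicitly that ``receiving a proposal implies matched,'' which follows from the update rule as described in the second step.
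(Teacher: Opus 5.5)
Your proof is correct and follows the paper's argument. You first show $pr\in H(s)$ from terminality and the position of $\mu(p)$ in $L(p)$, then use invariants (d) and (e) to conclude that $r$ holds someone it prefers to $p$; along the way you make explicit two steps the paper leaves implicit, namely using invariant (c) to get $\mu(p)=r^p_j$ with $j<q_p$, and checking from the update rule that any receiver that has received a proposal is matched.
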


\begin{proof}
Let $s^*=(\mu^*,q^*)$ be terminal and suppose that $pr$ blocks $\mu^*$.  Because $p$
prefers $r$ to its terminal partner, or is unmatched, $p$ must have proposed to $r$ before
becoming matched to a less-preferred receiver or exhausting its list.  Hence $pr\in H(s^*)$.
Receiver $r$ did not retain $p$.  By Lemma~\ref{lem:invariants}(e), the terminal partner of
$r$ is preferred to $p$, contradicting that $pr$ blocks.  If $r$ is terminally unmatched,
then Lemma~\ref{lem:invariants}(d) implies that it received no proposal, also contradicting
$pr\in H(s^*)$.
\end{proof}

The next lemma is classical; it is stated to identify the additional ingredient needed for
optimality.

\begin{lemma}[Rejection lemma; Gale--Shapley]\label{lem:rejection}
If a receiver rejects a proposer during a DA trajectory from $s_0$, then that pair occurs
in no stable matching of the instance.
\end{lemma}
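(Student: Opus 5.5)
We argue by induction on the time of the first rejection of a stable pair. First we fix what a rejection is in this model. At an update $U_p$ with $r=r^p_{q_p}$, receiver $r$ \emph{rejects} a proposer in two cases:
\begin{itemize}
\item in case (iv), $r$ rejects $p$;
\item in case (iii), $r$ rejects its displaced partner $p'$.
\end{itemize}
In both cases the rejected proposer $x$ has proposed to $r$, so $xr\in H$ by Lemma~\ref{lem:invariants}(c). Moreover, $r$ holds after the update a proposer it strictly prefers to $x$, by Lemma~\ref{lem:invariants}(e) and strictness of preferences. We call a pair $xr$ \emph{stable-feasible} if it belongs to some stable matching. Fix a trajectory from $s_0$, and suppose for contradiction that some stable-feasible pair is rejected. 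Take the first time step $t$ at which a receiver $r$ rejects a proposer $x$ with $xr\in\mu$ for some stable matching $\mu$.

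At step $t$, after the update, $r$ holds some proposer $y\ne x$ with $y\succ_r x$. The key step is to show that $y$ prefers $r$ to $\mu(y)$, or that $\mu(y)$ is empty. Since $y$ proposes down $L(y)$ in order and has proposed to $r$ by time $t$, every receiver $r'$ that $y$ ranks above $r$ was proposed to by $y$ earlier. Each such $r'$ either rejected $y$ before time $t$, or accepted $y$ and later displaced $y$ before time $t$. The second option is also a rejection, occurring strictly before $t$. So far $y$ cannot hold anything except $r$: $y$ either is newly accepted at $t$ (case (iii), with $y=p$), or was already $r$'s partner (case (iv), with $x=p$). Hence every receiver above $r$ on $L(y)$ rejected $y$ strictly before step $t$. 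By the minimality of $t$, none of these pairs is stable-feasible, so $\mu(y)$ is not any receiver that $y$ prefers to $r$. Also $\mu(y)\ne r$, because $\mu(r)=x\ne y$. Hence either $\mu(y)$ is a receiver that $y$ ranks below $r$, or $y$ is unmatched in $\mu$; the pair $yr$ is acceptable because $y$ proposed to $r$.

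We then conclude directly. Since $y\succ_r x=\mu(r)$ and $y$ prefers $r$ to $\mu(y)$ (or $\mu(y)$ is empty), the edge $yr$ blocks $\mu$. This contradicts the stability of $\mu$, so no stable-feasible pair is ever rejected.

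The main obstacle is the bookkeeping in the key step: showing that every receiver above $r$ on $L(y)$ rejected $y$ strictly before $t$, including the subtle case where $y$ was accepted earlier and later displaced. To handle this cleanly, we treat displacement in case (iii) as a rejection from the outset, and we observe that $y$ holds at most one receiver at a time, namely $r$ at time $t$. The result also needs no scheduler assumptions, since the argument uses only Lemma~\ref{lem:invariants} and the order in which pointers advance.
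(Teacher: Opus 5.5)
Your proof is correct and is essentially the paper's argument: take the first rejection of a pair belonging to some stable matching, show that the proposer $r$ then holds was already rejected, strictly earlier, by every receiver it ranks above $r$, and conclude that this proposer and $r$ form a blocking pair for the stable matching. Your write-up is slightly more careful than the paper's, because you explicitly count case (iii) displacements as rejections and you cover the case where the holder is unmatched in the stable matching, which the paper's two-case split omits.
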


\begin{proof}
Assume for contradiction that some rejection of a pair $pr$ is the first rejection of an
edge belonging to a stable matching $\nu$.  At that moment $r$ holds a proposer $p'$ whom
$r$ prefers to $p$.  In $\nu$, proposer $p'$ is matched either to $r$ or to a receiver
$r'$ that $p'$ ranks below $r$: otherwise $p'$ must previously have been rejected by a
receiver it prefers to $r$, contradicting the choice of the first rejected edge of $\nu$.
If $\nu(p')=r$, then $\nu$ cannot also contain $pr$.  If $\nu(p')=r'\ne r$, the pair
$p'r$ blocks $\nu$, because $p'$ prefers $r$ to $r'$ and $r$ prefers $p'$ to $p=\nu(r)$.
Both cases are impossible.
\end{proof}

\begin{theorem}[Proposer optimality and schedule independence]\label{thm:optimality}
Every terminal trajectory from $s_0$ produces the proposer-optimal stable matching.
Consequently, its terminal matching is independent of the scheduler.
\end{theorem}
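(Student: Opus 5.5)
We combine Theorems~\ref{thm:termination} and~\ref{thm:stability} with the rejection lemma (Lemma~\ref{lem:rejection}). First we set notation. Fix a scheduler $\sigma$ and let $s^*=(\mu^*,q^*)$ be the terminal state of its trajectory; it exists by Theorem~\ref{thm:termination}. By Theorem~\ref{thm:stability}, $\mu^*$ is stable. For each proposer $p$, define $\mathrm{best}(p)$ as the most-preferred receiver among those $r$ for which $pr$ lies in some stable matching. If no such $r$ exists, $p$ is unmatched in every stable matching and we set $\mathrm{best}(p)=\varnothing$. The goal is to show $\mu^*(p)=\mathrm{best}(p)$ for every $p$. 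Once this holds, $\mu^*$ is described without reference to $\sigma$, so the proposer-optimal stable matching exists and coincides with $\mu^*$, and schedule independence follows.

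The key step is to compare $\mu^*(p)$ with any receiver $r$ that $p$ strictly prefers to $\mu^*(p)$; this includes every acceptable $r$ when $p$ is terminally unmatched. The argument from the proof of Theorem~\ref{thm:stability} applies: $p$ visits its list in order and its pointer only advances, and at termination $p$ is either matched at position $q^*_p-1$ or has exhausted its list. Hence $pr\in H(s^*)$, so $p$ proposed to $r$. Since $r$ does not hold $p$ at the end, $p$ was rejected by $r$ at some point. This happens either immediately, under case (iv) of the local update, or later by displacement, under case (iii). Here we must check that Lemma~\ref{lem:rejection} covers both forms of rejection. I expect this to be the main point needing care. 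Its proof uses only that, at the moment of rejection, $r$ holds some $p'$ whom it prefers to $p$. In case (iv) this $p'$ is the current partner. In case (iii) it is the new proposer. So both cases are covered. The lemma then says $pr$ lies in no stable matching.

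Therefore no stable matching gives $p$ a receiver strictly better than $\mu^*(p)$. Because $\mu^*$ is itself stable, $\mu^*(p)$ is attained by a stable matching, and so $\mu^*(p)=\mathrm{best}(p)$. If $\mu^*(p)$ is empty, every acceptable $r$ was rejected, so $p$ is unmatched in every stable matching and $\mathrm{best}(p)=\varnothing$ as well.

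In particular, the assignment $p\mapsto \mathrm{best}(p)$ is a matching, namely $\mu^*$. It is stable, and every proposer weakly prefers it to every stable matching, so it is the proposer-optimal stable matching. Finally, take two schedulers $\sigma$ and $\sigma'$. Both terminal matchings equal $p\mapsto\mathrm{best}(p)$, which depends only on the instance. Hence the two terminal matchings are equal, which gives schedule independence.
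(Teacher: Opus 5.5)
Your proof is correct and follows the same route as the paper. Rejected pairs lie in no stable matching by Lemma~\ref{lem:rejection}, so no stable matching gives $p$ a receiver better than $\mu^*(p)$, and the stability of $\mu^*$ then makes it proposer-optimal; your explicit check that both immediate rejection (case (iv)) and displacement (case (iii)) fall under the lemma, and your derivation of schedule independence from the instance-determined map $p\mapsto\mathrm{best}(p)$ rather than from a cited uniqueness fact, make the argument slightly more self-contained without changing it.
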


\begin{proof}
Let $\mu^*$ be a terminal DA matching.  Whenever proposer $p$ passes a receiver $r$, the
pair $pr$ has been rejected and therefore belongs to no stable matching by
Lemma~\ref{lem:rejection}.  Thus no stable matching can assign $p$ a receiver that $p$
prefers to $\mu^*(p)$.  By Theorem~\ref{thm:stability}, $\mu^*$ is itself stable, hence it
is proposer optimal.  The proposer-optimal stable matching is unique, so all schedulers
produce the same terminal matching.
\end{proof}

\begin{remark}[What is and is not a fixed point]
For a fixed scheduler, terminal reachable states are fixed points of $\Phi_\sigma$.  An
arbitrary syntactic state with no active proposer need not encode a valid DA history and
need not have a stable matching component.  Theorems~\ref{thm:stability} and
\ref{thm:optimality} therefore concern reachable terminal states, not all fixed points of
an unrestricted state map.
\end{remark}

\section{A small example}

\begin{example}
Let $P=\{m_1,m_2,m_3,m_4\}$ and $R=\{w_1,w_2,w_3,w_4\}$, with preferences
\[
\begin{array}{ll}
 m_1:w_4\succ w_1\succ w_3\succ w_2, & w_1:m_1\succ m_2\succ m_4\succ m_3,\\
 m_2:w_3\succ w_4\succ w_2\succ w_1, & w_2:m_2\succ m_4\succ m_3\succ m_1,\\
 m_3:w_2\succ w_3\succ w_4\succ w_1, & w_3:m_4\succ m_3\succ m_2\succ m_1,\\
 m_4:w_4\succ w_1\succ w_3\succ w_2, & w_4:m_1\succ m_2\succ m_3\succ m_4.
\end{array}
\]
The proposer-side DA outcome is
\[
 \{m_1w_4,m_2w_3,m_3w_2,m_4w_1\},
\]
and the receiver-side DA outcome is
\[
 \{m_1w_4,m_2w_2,m_3w_3,m_4w_1\}.
\]
Exhaustive enumeration shows that these are the only stable matchings, so their lattice is
a two-element chain.  This example illustrates extremal selection, but it is not used in
the proofs.
\end{example}

\section{Proof obligations and structural diagnostics}

The preceding results separate several proof obligations that are often grouped together.
The table records sufficient ingredients in the classical proof; it does not assert that
they are logically minimal or that their absence implies computational hardness.

\begin{table}[ht]
\centering
\small
\renewcommand{\arraystretch}{1.18}
\begin{tabular}{p{2.7cm}p{4.4cm}p{5.5cm}}
\toprule
Conclusion & Sufficient proof ingredient & What must be replaced when the ingredient is lost\\
\midrule
Termination & finite acceptable-edge set and no repeated proposals & a different potential,
cycle-elimination rule, or global termination argument\\
Stability & exhaustive proposal opportunity and receiver best-so-far invariant & a new dominance
or blocking-pair argument\\
Proposer optimality & rejection lemma & another certificate that rejected edges cannot belong to
a preferred stable outcome\\
Schedule independence & uniqueness of the proposer-optimal stable matching & a confluence or
canonical-selection theorem\\
Lattice extremality & lattice theorem for the complete stable set & a structural description of
all feasible stable outcomes\\
Strategy-proofness & mechanism-design argument in addition to DA correctness & a direct incentive
analysis\\
\bottomrule
\end{tabular}
\caption{Proof obligations in the classical deferred-acceptance theory.}
\label{tab:obligations}
\end{table}

\begin{proposition}[Diagnostic principle]\label{prop:diagnostic}
If a variant preserves the hypotheses used in Theorems~\ref{thm:termination} and
\ref{thm:stability}, then the same termination and stability proofs apply verbatim.  If a
hypothesis is not preserved, the corresponding conclusion is not disproved, but the
classical proof has an explicit missing obligation listed in Table~\ref{tab:obligations}.
\end{proposition}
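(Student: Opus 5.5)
The statement is essentially meta-logical, so the plan is to make the two halves precise by auditing the existing proofs. I would first list, for each of Theorems~\ref{thm:termination} and \ref{thm:stability}, exactly which facts their proofs invoke. Theorem~\ref{thm:termination} uses only Lemma~\ref{lem:invariants}(a) and (b): strict growth of $H(s)$, and finiteness of $E$ with no repeated proposal. Theorem~\ref{thm:stability} uses three ingredients. The first is the exhaustive-proposal property: a proposer that ends below $r$, or unmatched, has passed $r$, so $pr\in H(s^*)$. The second is Lemma~\ref{lem:invariants}(e), the receiver best-so-far invariant. The third is Lemma~\ref{lem:invariants}(d), monotone occupancy of receivers. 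These ingredients also depend on the bipartition $P\cup R$, on strict ordinal comparisons, and on irreversibility of the pointer $q$. The proofs themselves rely on these only through (a)--(e).

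For the first claim, I would call a variant \emph{hypothesis-preserving} if its states carry a proposed-edge set $H$ and a matching, and its reachable trajectories satisfy the analogues of Lemma~\ref{lem:invariants}(a)--(e) together with the blocking-pair notion used in Theorem~\ref{thm:stability}. Under that definition, each line of the two proofs cites only these facts. One checks line by line that the proofs transfer with symbols reinterpreted, which is what ``verbatim'' means here.

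For the second claim, I would argue contrapositively by cases over the rows of Table~\ref{tab:obligations}. If (a) or (b) fails, the bound $Q\le|E|$ is no longer justified, so termination needs a replacement potential; this is the first row. If exhaustive proposal or (d)--(e) fails, the step from ``$pr$ blocks'' to a contradiction loses its justification; this is the second row. I would also point out that losing a proof step does not yield a counterexample. The conclusion may still hold by some other argument, which is why the proposition says ``not disproved.''

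The main obstacle is that ``hypothesis'' and ``variant'' are informal. The real work is fixing a definition of variant that is precise enough to make ``apply verbatim'' a checkable statement. I would first try the abstract-interface definition above, a transition system with $H$, $\mu$, and invariants (a)--(e). I would then accept that the second sentence is a bookkeeping statement about proof dependencies, not a mathematical theorem about the variants themselves.
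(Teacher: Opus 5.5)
Your route is the paper's own. The paper proves the first sentence by inspection of the two proofs, and the second by noting that removing an invoked hypothesis invalidates a derivation without implying the negation of its conclusion; your audit plus the ``not disproved'' remark is an expanded version of exactly that.

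Where your plan fails as written is the interface meant to make ``verbatim'' checkable. You call a variant hypothesis-preserving if its trajectories satisfy analogues of Lemma~\ref{lem:invariants}(a)--(e) plus the blocking notion, and you assert that every line of both proofs cites only these facts. For Theorem~\ref{thm:stability} this is false. The step giving $pr\in H(s^*)$ uses the exhaustive-proposal property that you yourself listed as a separate ingredient: proposals follow the order $L(p)$, and at a terminal state every unmatched proposer has $q_p=d(p)+1$. Once $H$ is just an abstract set of proposed edges, none of (a)--(e) implies this. Concretely, let $p$ have list $r_1\succ r_2$, and take a variant in which $p$ proposes first to $r_2$, who accepts. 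Then $H=\{pr_2\}$, (a)--(e) all hold, the state is terminal because $p$ is matched, and $pr_1$ blocks. For the same reason, your remark that the proofs use the pointer's order ``only through (a)--(e)'' is inaccurate.

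There is a second, smaller omission of the same kind. In the unmatched-receiver case, the stability proof infers from Lemma~\ref{lem:invariants}(d) that $r$ received no proposal. That inference silently uses update rule (ii): an unmatched receiver accepts any acceptable proposer. Parts (d) and (e) are vacuous for a receiver that declines while unmatched. So a variant in which $r$ turns down an acceptable $p$, who then exhausts its list, satisfies (a)--(e) but leaves the blocking pair $pr$.

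To repair the interface, add two properties:
\begin{itemize}
\item the exhaustive-proposal property at terminal states: if $p$ prefers $r$ to its partner or is unmatched, then $pr\in H$;
\item any receiver that has received a proposal is matched.
\end{itemize}
With these, the line-by-line check does go through. Your case analysis for the second sentence already cites exhaustive proposal in the stability row, as Table~\ref{tab:obligations} does, so it then becomes consistent with the first half.
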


\begin{proof}
The first statement follows by inspection of the two proofs.  The second is a logical
statement about proof reuse: removing an invoked hypothesis invalidates that derivation but
does not imply the negation of its conclusion.
\end{proof}

Some standard variants illustrate the distinction.
\begin{itemize}
\item Strict incomplete lists preserve the finite-edge, no-repeat, and best-so-far
invariants; the formulation above already includes this case.
\item Ties remove unique ordinal comparisons.  Weak, strong, and super stability require
different acceptance and blocking rules, so the relevant receiver invariant must be
restated~\cite{manlove2013}.
\item Stable roommates removes a fixed proposer--receiver bipartition.  The classical
proposer-optimal order is therefore unavailable, although Irving's separate algorithm
solves the strict complete roommates problem in polynomial time~\cite{irving1985}.
\item Constraints or matching with contracts may replace individual receiver choice by a
choice function.  Substitutability and irrelevance of rejected contracts are then the
appropriate structural conditions~\cite{hatfield2005}.
\end{itemize}

\section{Relation to lattice fixed-point formulations}

The state operator above is procedural.  It acts on reachable algorithm states and records
one DA trajectory.  By contrast, Adachi~\cite{adachi2000} and Fleiner~\cite{fleiner2003}
construct order-theoretic maps whose fixed points characterize stable outcomes, within the
tradition of Tarski's fixed-point theorem~\cite{tarski1955}.  These are different uses of
``fixed point.''  In particular, the present proofs use a strictly increasing finite
potential along a trajectory; they do not require global isotonicity of $\Phi_\sigma$ on a
complete lattice.

The lattice of stable matchings is likewise a theorem about the entire solution set, not a
consequence of the trajectory potential.  The proposer- and receiver-optimal DA outcomes
are its extremal elements~\cite{gusfield1989,knuth1997}.  One-sided strategy-proofness is a
further mechanism-design theorem~\cite{dubins1981,roth1982}; it is not proved by the
potential argument.

\section{Conclusion}

A reachable-state formulation prevents three common conflations.  First, a fixed point of a
syntactic state map is not automatically a stable matching; reachability from the canonical
initial state matters.  Second, trajectory monotonicity is not the same as isotonicity on a
complete lattice.  Third, loss of a classical proof invariant is not a complexity theorem.
Within these limits, the local-operator language gives a precise account of the standard
DA proof architecture and a reusable checklist for variants.

\end{document}